\newtheorem{theorem}{Theorem}
\newdefinition{definition}{Definition}
\newdefinition{assumption}{Assumption}
\newdefinition{example}{Example}
\newdefinition{remark}{Remark}
\newproof{proof}{Proof}
\DeclareMathOperator{\Id}{Id}
\DeclareMathOperator*{\esssup}{ess.\,sup}
\DeclareMathOperator{\sgn}{sgn}
\journal{}
\begin{document}

\begin{frontmatter}

\title{Symbolic Abstractions for Nonlinear Control Systems via Feedback Refinement Relation  \tnoteref{label0}}
\tnotetext[label0]{This work was supported by the H2020 ERC Starting Grant BUCOPHSYS, the EU H2020 Co4Robots Project, the Swedish Foundation for Strategic Research (SSF), the Swedish Research Council (VR) and the Knut och Alice Wallenberg Foundation (KAW).}

\author{Wei Ren \corref{cor1}}\ead{weire@kth.se}
\author{Dimos V. Dimarogonas}\ead{dimos@kth.se}

\address{Division of Decision and Control Systems, EECS, KTH Royal Institute of Technology, SE-10044, Stockholm, Sweden.}
\cortext[cor1]{Corresponding author.}

\begin{abstract}
This paper studies the construction of symbolic abstractions for nonlinear control systems via feedback refinement relation. Both the delay-free and time-delay cases are addressed. For the delay-free case, to reduce the computational complexity, we propose a new approximation approach for the state and input sets based on a static quantizer, and then a novel symbolic model is constructed such that the original system and the symbolic model satisfy the feedback refinement relation. For the time-delay case, both static and dynamic quantizers are combined to approximate the state and input sets. This leads to a novel dynamic symbolic model for time-delay control systems, and a feedback refinement relation is established between the original system and the symbolic model. Finally, a numerical example is presented to illustrate the obtained results.
\end{abstract}

\begin{keyword}
Symbolic abstraction, nonlinear control systems, quantizers, time-delay control systems, feedback refinement relation.
\end{keyword}

\end{frontmatter}

\section{Introduction}
\label{sec-intro}

To deal with controller synthesis problems efficiently and to guarantee the correct-by-design controller synthesis, symbolic models have attracted considerable attention in recent years; see \citep{Tabuada2006linear, Tabuada2009verification}. A symbolic model is a discrete approximation of a continuous system such that the controller designed for the discrete approximation can be refined to the controller for the original system \citep{Majumdar2012approximately}. Because of symbolic models, when complex specifications are considered, algorithmic machineries for controller synthesis of discrete systems can be used to synthesize controllers for continuous systems automatically \citep{Girard2007approximation}. As a result, the symbolic model based approach provides an alternative for controller synthesis of control systems.

The essence to this approach is to find an equivalence relation on the state space of dynamic systems. Such equivalence relation leads to a new system, which is on the quotient space and shares the properties of interest with the original system. In the literature, several abstraction techniques have been developed; see \citep{Pola2008approximately, Reissig2014feedback}. The commonly-used equivalence relation is the (bi)simulation relation and its approximate variants \citep{Pola2008approximately, Girard2010approximately, Girard2007approximation}, which capture equivalences of dynamic systems in the exact or approximate settings. However, these equivalence relations require a preassumption: \textit{the original system is incrementally stable or incrementally forward complete}, which constrains the class of control systems that can be studied via the symbolic model based approach. In addition, the exact information of the original system is needed to obtain the refined controller, which results in a huge computational complexity for the abstract controller due to its abstraction refinement. As a result, a feedback refinement relation was proposed in \citep{Reissig2014feedback} as an alternative equivalence relation, which connects the abstract controller to the original system via a static quantizer \citep{Reissig2017feedback}. Some results can be found on these equivalence relations; see \citep{Pola2008approximately, Girard2010approximately, Girard2007approximation} for the approximate (bi)simulation relations and \citep{Reissig2017feedback, Meyer2018compositional} for the feedback refinement relation. In previous works, these equivalence relations are based on uniform quantization, which partitions the space with a uniform distance \citep{Delchamps1990stabilizing}.

In this paper, we focus on symbolic abstractions of nonlinear control systems via the feedback refinement relation. Using both static and dynamic quantization, symbolic models are constructed for nonlinear control systems in both the delay-free and time-delay cases. Furthermore, the feedback refinement relation is ensured between the original system and the constructed symbolic model. Our results improve the existing techniques in two directions: i) by introducing a different static quantization and combining both static and dynamic quantization in the construction of symbolic abstractions, and ii) by extending the feedback refinement relation from the delay-free case to the time-delay case. In the first direction, our technique improves upon the results of \citep{Girard2010approximately, Reissig2017feedback} by applying a non-uniform quantizer in the construction of symbolic models for delay-free systems, and the results of \citep{Pola2010symbolic} by combining static and dynamic quantizers in the construction of symbolic models for time-delay systems. For the delay-free case, a logarithmic quantizer \citep{Liu2012sector, Fu2005sector}, which is static but partitions the space with different distances, is applied in the construction of symbolic models. For the time-delay case, the combination of the logarithmic and zoom quantizers, in which the zoom quantizer is a dynamic quantizer as in \citep{Liberzon2003hybrid, Ren2018quantized}, is implemented in the approximation of the state and input sets. In that sense, we propose an alternative approximation approach for the state and input sets. Due to the combination of the logarithmic and zoom quantizers, the computational complexity is reduced for the construction of symbolic models. In the second direction, our technique improves the results in \citep{Reissig2017feedback} by extending the feedback refinement relation from delay-free systems to time-delay systems. Besides the combination of the logarithmic and zoom quantizers, the spline functions are also involved in the construction of the symbolic model, and the feedback refinement relation is guaranteed from the original system to the symbolic model.

A preliminary version of this work has been presented in \citep{Ren2019logarithmic} where only the delay-free case is considered. The current paper refines and extends the approach to consider the time-delay case by combining static and dynamic quantization. Thus, the results in \citep{Ren2019logarithmic} are special cases of this paper.

The remainder of this paper is organized as follows. In Section \ref{sec-nonconsys}, both nonlinear control systems and time-delay control systems are introduced. Both static and dynamic quantizers are given in Section \ref{sec-quantizers}. Symbolic models are constructed for nonlinear control systems in Section \ref{sec-delayfree}, and for time-delay control systems in Section \ref{sec-timedelay}. Finally, a numerical example is presented in Section \ref{sec-example} to illustrate the developed results. Conclusions and future works are presented in Section \ref{sec-conclusion}.

\section{Nonlinear Control Systems}
\label{sec-nonconsys}

In this section, basic definitions of nonlinear control systems and time-delay control systems are introduced. To begin with, some necessary notations are introduced.

\subsection{Notations}
\label{subsec-note}

$\mathbb{R}:=(-\infty, +\infty)$; $\mathbb{R}^{+}_{0}:=[0, +\infty)$; $\mathbb{R}^{+}:=(0, +\infty)$; $\mathbb{N}:=\{0, 1, \ldots\}$; $\mathbb{N}^{+}:=\{1, 2, \ldots\}$. Given a vector $x\in\mathbb{R}^{n}$, $x_{i}$ denotes the $i$-th element of $x$; $|x_{i}|$ is the absolute value of $x_{i}$; $|x|:=(|x_{1}|, \ldots, |x_{n}|)\in\mathbb{R}^{n}$; $\|x\|:=\max\{|x_{1}|, \ldots, |x_{n}|\}$ denotes the infinity norm of $x\in\mathbb{R}^{n}$. Given $a, b\in(\mathbb{R}\cup\{\pm\infty\})^{n}$, we define the relations $<, >, \leq, \geq$ on $a, b$ component-wise; a cell $\llbracket a, b\rrbracket$ is the closed set $\{x\in\mathbb{R}^{n}|a_{i}\leq x_{i}\leq b_{i}\}$. $E\in\mathbb{R}^{n}$ is the vector whose components are 1. Given a measurable and essentially bounded function $f: \mathbb{R}^{+}_{0}\rightarrow\mathbb{R}^{n}$, the (essential) supremum norm of $f$ is $\|f\|_{\infty}:=\esssup\{\|f(t)\|, t\geq0\}$. For a given $\tau\in\mathbb{R}^{+}$, define $f_{\tau}$ such that $f_{\tau}(t):=f(t)$ for any $t\in[0, \tau]$, and $f(t)=0$ elsewhere; $f$ is locally essentially bounded if $f_{\tau}$ is essentially bounded.

Given $a, b\in\mathbb{R}$ with $a<b$, $\mathcal{C}([a, b], \mathbb{R}^{n})$ denotes the space of the continuous functions $f: [a, b]\rightarrow\mathbb{R}^{n}$. For any continuous function $x(s)$ defined on $-\Theta\leq s<a$ with $\Theta, a>0$, and given any fixed $t$ with $0\leq t<a$, the symbol $x_{t}$ denotes the element of $\mathcal{C}([a, b], \mathbb{R}^{n})$ defined by $x_{t}(\theta):=x(t+\theta)$ with $\theta\in[-\Theta, 0]$. Given two sets $A$ and $B$, if $A$ is a subset of $B$, then we denote by $\Id_{A}: A\hookrightarrow B$ or simply by $\Id$ the natural inclusion map taking any $a\in A$ to $\Id(a)\in B$. A relation $\mathcal{R}\subset A\times B$ is the map $\mathcal{R}: A\rightarrow2^{B}$ defined by $b\in\mathcal{R}(a)$ if and only if $(a, b)\in\mathcal{R}$. Given a relation $\mathcal{R}\subset A\times B$, $\mathcal{R}^{-1}$ denotes the inverse relation of $\mathcal{R}$, i.e., $\mathcal{R}^{-1}:=\{(b, a)\in B\times A: (a, b)\in\mathcal{R}\}$.

\subsection{Nonlinear Control Systems}
\label{subsec-nonsystem}

In this paper, we consider nonlinear control systems in both the delay-free and time-delay cases. For the delay-free case, the class of nonlinear control systems is formalized below.

\begin{definition}[\citep{Pola2008approximately}]
\label{def-1}
A \textit{control system} is a quadruple $\Sigma=(X, U, \mathcal{U}, f)$, where, (i) $X\subseteq\mathbb{R}^{n}$ is the state set; (ii) $U\subseteq\mathbb{R}^{m}$ is the input set; (iii) $\mathcal{U}$ is a subset of all piecewise continuous functions of time from the interval $(a, b)\subset\mathbb{R}$ to $U$ with $a<0<b$; (iv) $f: X\times U\rightarrow X$ is a continuous map satisfying the following Lipschitz assumption: given a bounded set $K\subseteq X\times U$, there exists a constant $L_{1}\in\mathbb{R}^{+}$ (which may be related to the set $K$), such that $\|f(x, u)-f(y, u)\|\leq L_{1}\|x-y\|$ for all $(x, u), (y, u)\in K$.
\end{definition}

A curve $\xi: (a, b)\rightarrow X$ is said to be a \textit{trajectory} of $\Sigma$, if there exists $\mathbf{u}\in\mathcal{U}$ such that $\dot{\xi}(t)=f(\xi(t), \mathbf{u}(t))$ for almost all $t\in(a, b)$. Different from the trajectory defined over the open domain, we refer to the trajectory $\mathbf{x}: [0, \tau]\rightarrow X$ defined on a closed interval $[0, \tau]$ with $\tau\in\mathbb{R}^{+}$ such that $\mathbf{x}=\xi|_{[0, \tau]}$. Denote by $\mathbf{x}(t, x, \mathbf{u})$ the point reached at time $t\in(a, b)$ under the input $\mathbf{u}\in\mathcal{U}$ from the initial condition $x\in X$. Such a point is uniquely determined, since the assumptions on $f$ ensure the existence and uniqueness of the state trajectory; see also \citep[Appendix C.3]{Sontag2013mathematical}. In addition, if $f(0, 0)=0$, then $x(t)=0$ is the trivial solution for the unforced system $\dot{x}(t)=f(x, 0)$. A system $\Sigma$ is said to be \emph{forward complete}, if every trajectory is defined on an interval of the form $(a, +\infty)$. Sufficient and necessary conditions can be found in \citep{Angeli1999forward} for the forward completeness of control systems.

\subsection{Time-delay Control Systems}
\label{subsec-delaysys}

Consider the following nonlinear time-delay control system
\begin{equation}
\label{eqn-1}
\left\{\begin{aligned}
\dot{x}(t)&=f(x_{t}, u(t-r)), \quad t\in\mathbb{R}^{+}, \\
x(t)&=\xi_{0}(t), \quad t\in[-\Theta, 0],
\end{aligned}\right.
\end{equation}
where $\Theta\in\mathbb{R}^{+}_{0}$ is the upper bound of the time delay that the system state involves, $r\in\mathbb{R}^{+}_{0}$ is the constant time delay that the input involves, $x(t)\in X\subseteq\mathbb{R}^{n}$ is the system state, $\xi_{0}\in\mathcal{C}^{0}([-\Theta, 0], X)$ is the initial condition, $x_{t}\in\mathcal{X}\subseteq\mathcal{C}^{0}([-\Theta, 0],  X)$ is the time-delay state, and $u(t)\in U\subseteq\mathbb{R}^{m}$ is the control input with the domain $[-r, +\infty)$. In the following, we formalize the definition for time-delay control systems.

\begin{definition}
\label{def-2}
A \textit{time-delay control system} is a sextuple $\bar{\Sigma}=(X, \mathcal{X}, \xi_{0}, U, \mathcal{U}, f)$, where, (i) $ X\subseteq\mathbb{R}^{n}$ is the state set; (ii) $\mathcal{X}\subseteq\mathcal{C}^{0}([-\Theta, 0], X)$ is the set of the time-delay states; (iii) $\xi_{0}\in\mathcal{X}$ is the initial condition; (iv) $U\subseteq\mathbb{R}^{m}$ is the input set; (v) $\mathcal{U}$ is a subset of all measurable and locally essentially bounded functions of time from $[-r, +\infty)$ to $U$; (vi) $f: \mathcal{X}\times U\rightarrow X$ is a continuous map satisfying the following Lipschitz assumption: given a bounded set $\bar{K}\subseteq\mathcal{X}\times U$, there exists a constant $L_{2}\in\mathbb{R}^{+}$ (which may be related to the set $\bar{K}$), such that $\|f(x, u)-f(y, u)\|\leq L_{2}\|x-y\|$ for all $(x, u), (y, u)\in\bar{K}$.
\end{definition}

Without loss of generality, we assume $f(0, 0)=0$, thus ensuring that $x(t)=0$ is the trivial solution for the unforced system $\dot{x}(t)=f(x_{t}, 0)$. Note that multiple discrete non-commensurate delays and distributed delays are allowed but bounded in \eqref{eqn-1}; see also \citep{Pola2010symbolic, Pola2015symbolic}. The assumptions on $f$ ensure the existence and uniqueness of the solution of \eqref{eqn-1}. Due to the time delays in \eqref{eqn-1}, both $\mathcal{X}$ and $\mathcal{U}$ are functional spaces, which is the main difficulty in the approximation of the state and input sets; see Section \ref{sec-timedelay}.

\subsection{Feedback Refinement Relation}
\label{subsec-feedrefine}

The objective of this paper is to construct symbolic models for nonlinear control systems in both the delay-free and time-delay cases. To this end, we next present a feedback refinement relation, which is a criterion to verify the relation between two transition systems. To begin with, the class of transition systems that will be used to denote symbolic models for control systems is introduced.

\begin{definition}[\citep{Tabuada2009verification}]
\label{def-3}
A \textit{transition system} is a sextuple $T=(X, X^{0}, U, \Delta, Y, H)$, where, (i) $X$ is the set of states; (ii) $X^{0}$ is the set of initial states; (iii) $U$ is the set of inputs; (iv) $\Delta\subseteq X\times U\times X$ is the transition relation; (v) $Y$ is the output set; (vi) $H : X\rightarrow Y$ is the output map.
\end{definition}

\begin{definition}[\citep{Reissig2017feedback}]
\label{def-4}
Let $T_{1}$ and $T_{2}$ be two transition systems with $T_{i}=(X_{i}, X^{0}_{i}, U_{i}, \Delta_{i}, Y_{i}, H_{i})$ for $i\in\{1, 2\}$, and assume that $U_{2}\subseteq U_{1}$. A relation $\mathcal{F}\subseteq X_{1}\times X_{2}$ is a \textit{feedback refinement relation} from $T_{1}$ to $T_{2}$, if for all $(x_{1}, x_{2})\in\mathcal{F}$, (i) $U_{2}(x_{2})\subseteq U_{1}(x_{1})$;
(ii) $u\in U_{2}(x_{2})\Rightarrow\mathcal{F}(\Delta_{1}(x_{1}, u))\subseteq\Delta_{2}(x_{2}, u)$, where $U_{i}(x):=\{u\in U_{i}: \Delta_{i}(x, u)\neq\varnothing\}$ and $i\in\{1, 2\}$. We denote by $T_{1}\preceq_{\mathcal{F}}T_{2}$ if $\mathcal{F}\subseteq X_{1}\times X_{2}$ is a feedback refinement relation from $T_{1}$ to $T_{2}$.
\end{definition}

\section{Static and Dynamic Quantizers}
\label{sec-quantizers}

To approximate the state and input sets, a general approach is based on quantization, which divides the state and input sets via a sequence of embedded lattices, whose intersection points are used to denote the abstract states and inputs in the local regions. In terms of quantization mechanisms, there are generally two types of quantizers: static quantizers and dynamic quantizers, which are introduced in the following subsections.

\subsection{Static Quantizer}
\label{subsec-staticquan}

A static quantizer is a memoryless and time-invariant piecewise-constant function $\mathbf{q}: \mathbb{R}^{n}\rightarrow\mathcal{Q}$, where $\mathcal{Q}$ is a finite subset of $\mathbb{R}^{n}$; see also \citep{Liberzon2003hybrid}. That is, the quantizer $\mathbf{q}$ divides $\mathbb{R}^{n}$ into a finite number of quantization regions of the form $\{z\in\mathbb{R}^{n}: \mathbf{q}(z)=\jmath\in\mathcal{Q}\}$. Since the quantization regions of static quantizers are time-invariant, static quantizers provide simple structures for the approximation of the state and input sets \citep{Ren2018quantized}. A commonly-used static quantizer is the uniform quantizer, which partitions the state and input sets uniformly \citep{Pola2008approximately}. However, to reduce the computational complexity, we introduce the following logarithmic quantizer, which provides an alternative for the approximation of the state and input sets.

\begin{definition}[\citep{Fu2005sector}]
\label{def-5}
A quantizer is called a \textit{logarithmic quantizer}, if it has the following form
\begin{align}
\label{eqn-2}
Q_{1}(z):=\left\{\begin{aligned}
&z_{i},  & & (1+\eta)^{-1}z_{i}<z\leq(1-\eta)^{-1}z_{i}; \\
&0, & & 0\leq z\leq(1+\eta)^{-1}d;  \\
& -Q_{1}(-z), & & z<0, \\
\end{aligned}\right.
\end{align}
where $z_{i}=\rho^{(1-i)}d$, $\rho=\frac{1-\eta}{1+\eta}$, $\eta\in(0, 1)$, $d>0$, and $i\in\mathbb{N}^{+}$.
\end{definition}

In Definition \ref{def-5}, the constant $\rho\in(0, 1)$ is called the quantization density; the constant $z_{\min}:=(1+\eta)^{-1}d$ determines the size of the deadzone. From \eqref{eqn-2}, $z_{i+1}=\rho^{-1}z_{i}$ for all $i\in\mathbb{N}^{+}$. The quantization error, which is defined as $z-Q_{1}(z)$, can be written as $z-Q_{1}(z)=\Lambda(z)z$ with $\Lambda(z)\in[-\eta, \eta]$; see \citep{Fu2005sector, Liu2012sector}. The quantizer $Q_{1}(z)$ takes values from the set $\mathcal{Q}:=\{0, \pm z_{i}: i\in\mathbb{N}^{+}\}$. For each $q\in\mathcal{Q}$, its quantization region is denoted by
\begin{align}
\label{eqn-3}
\hat{q}:=\left\{\begin{aligned}
&\ [\alpha_{1}q, \alpha_{2}q], \quad &&\text{ if } q\neq0,  \\
&\ [-z_{\min}, z_{\min}], \quad &&\text{ if } q=0,
\end{aligned}\right.
\end{align}
where $\alpha_{1}:=\max\{0, \sgn(q)\}(1+\eta)^{-1}+\max\{0, -\sgn(q)\}(1-\eta)^{-1}\},$ and $\alpha_{2}:=\max\{0, \sgn(q)\}(1-\eta)^{-1}+\max\{0, -\sgn(q)\}(1+\eta)^{-1}\}$. The closed quantization regions in \eqref{eqn-3} imply that the signal $z\in\hat{z}_{i}\cap\hat{z}_{i+1}$ can be quantized as either $z_{i}$ or $z_{i+1}$, $i\in\mathbb{N}^{+}$. Different from the uniform quantizer applied in \citep{Pola2008approximately, Pola2010symbolic, Reissig2017feedback}, the logarithmic quantizer partitions the state set non-uniformly, and the quantization becomes coarser as the distance between the origin and the signal gets larger.

\subsection{Dynamic Quantizer}
\label{subsec-dynquan}

A dynamic quantizer is based on the static quantizer and has a time-varying parameter, which adjusts quantization levels dynamically; see \citep{Ren2018quantized, Liberzon2003hybrid}. Therefore, the quantization regions are dynamic, which provides more flexibility for control design. Next, the zoom quantizer, which is a commonly-used dynamic quantizer \citep{Liberzon2003hybrid}, is introduced.

For a static quantizer $\mathbf{q}$, assume that there exist constants $M>\Lambda>0$ and $\Lambda_{0}>0$ such that the following conditions are satisfied; see also \citep{Ren2018quantized, Liberzon2003hybrid}.
\begin{align}
\label{eqn-4}
\|z\|\leq M&\Rightarrow\|\mathbf{q}(z)-z\|\leq\Lambda, \\
\label{eqn-5}
\|z\|>M&\Rightarrow\|\mathbf{q}(z)\|>M-\Lambda, \\
\label{eqn-6}
\|z\|\leq\Lambda_{0}&\Rightarrow\mathbf{q}(z)\equiv0.
\end{align}
In \eqref{eqn-4}-\eqref{eqn-5}, $M$ is called the range of the quantizer, and $\Lambda$ is called the upper bound of the quantization error $\mathbf{q}(z)-z$. Condition \eqref{eqn-4} implies that the quantization error is bounded by $\Lambda$ if the signal does not saturate. Condition \eqref{eqn-5} provides an approach to detecting whether the signal saturates or not. $\Lambda_{0}>0$ in \eqref{eqn-6} is called the size of the deadzone, which implies that the signal is quantized to zero directly if the signal is small enough.

Following the above assumption, the zoom quantizer is given as follows. For any signal $z\in\mathbb{R}$ to be quantized,
\begin{align}
\label{eqn-7}
Q_{2}(z, \delta)&:=\delta\mathbf{q}(z/\delta) \nonumber \\
& \quad =\left\{\begin{aligned}
&M\Lambda\delta,  && z\geq(M+0.5)\Lambda\delta; \\
&k\Lambda\delta, && (k-0.5)\Lambda\delta\leq z<(k+0.5)\Lambda\delta; \\
&-M\Lambda\delta,  && z<-(M+0.5)\Lambda\delta,
\end{aligned}\right.
\end{align}
where $\delta\in\mathbb{R}^{+}$ is the time-varying quantization parameter, and $k\in\mathcal{M}:=\{-M, \ldots, M\}$. For the zoom quantizer \eqref{eqn-7}, the quantization range is $M\Lambda\delta$; the upper bound of the quantization error is $\Lambda\delta$; and the size of the deadzone is $\Lambda\delta$, all of which are time-varying due to the dependence on $\delta$. For each quantized measurement $q\in\{k\Lambda\delta: k\in\mathcal{M}\}$, the corresponding quantization region is given by $\hat{q}:=[(k-0.5)\Lambda\delta, (k+0.5)\Lambda\delta]$.

From geometrical considerations, we have that for all $z\in\mathbb{R}$, $\|z-Q_{2}(z, \delta)\|\leq\Lambda\delta$ if $\|z\|\leq(M+1)\Lambda\delta$; otherwise, $\|z-Q_{2}(z, \delta)\|>\Lambda\delta$. The quantization error is bounded in the region $\{z\in\mathbb{R}: \|z\|\leq(M+1)\Lambda\delta\}$. As a result, the zoom quantizer \eqref{eqn-7} is only available for bounded regions; see \citep{Ren2018quantized, Liberzon2003hybrid}.

\section{Symbolic Model for Delay-free Case}
\label{sec-delayfree}

In this section, we focus on the construction of the symbolic abstraction for nonlinear control systems in the delay-free case. To this end, we work with the time-discretization of the system $\Sigma$. Assume the sampling period is $\tau>0$, which is a design parameter. We define the time-discretization of the system $\Sigma$ as the transition system $T_{\tau}(\Sigma):=(X_{1}, X^{0}_{1}, U_{1}, \Delta_{1}, Y_{1}, H_{1})$, where,
\begin{itemize}
  \item the state set is $X_{1}:=X$;
  \item the set of initial states is $X^{0}_{1}:=X$;
  \item the input set is $U_{1}:=\{u\in\mathcal{U}: \mathbf{x}(\tau, x, u) \text{ is defined for all } x\in X\}$;
  \item the transition relation is given as follows: for $x\in X_{1}$ and $u\in U_{1}$, $x'=\Delta_{1}(x, u)$ if and only if $x'=\mathbf{x}(\tau, x, u)$;
  \item the output set is $Y_{1}:=X$;
  \item the output map is $H: X_{1}\hookrightarrow X_{1}$.
\end{itemize}

\subsection{Approximation of State and Input Sets}
\label{subsec-nonapprox}

Using the quantizer \eqref{eqn-2}, we approximate the state and input sets in this subsection. The states set $X$ is approximated by the sequence of embedded lattices $[X]_{\eta}:=\{q\in X: q_{i}\in\{0, \pm\rho^{(1-k_{i})}d\}, k_{i}\in\mathbb{N}^{+}, i\in\{1, \ldots, n\}\}$, where $\rho=(1+\eta)^{-1}(1-\eta)$, $\eta\in(0, 1)$ is treated as a state space parameter, and $d>0$ is a fixed constant. For each $q\in[X]_{\eta}$, its quantization region is given by $\hat{q}$ as in \eqref{eqn-3}. We associate a quantizer $Q_{\eta}: X\rightarrow[X]_{\eta}$ such that $Q_{\eta}(x)=Q_{1}(x)=q$ if and only if for $x=(x_{1}, \ldots, x_{n})\in X$ and $i\in\{1, \ldots, n\}$,
\begin{align*}
\frac{|q_{i}|}{1+\eta}\leq|x_{i}|\leq\frac{|q_{i}|}{1-\eta} \text{ or } \frac{-d}{1+\eta}\leq x_{i}\leq\frac{d}{1+\eta}.
\end{align*}
As a result, we obtain from \eqref{eqn-2} and simple geometrical considerations that for all $x\in X$, $\|x-Q_{\eta}(x)\|\leq\Lambda(x)\|x\|$, where $\Lambda(x)\in[-\eta, \eta]$. With the quantizer $Q_{\eta}$, the state set $X$ is partitioned as $\hat{X}$ with
\begin{align}
\label{eqn-8}
\hat{X}&:=\bigcup_{q\in[\mathbb{R}^{n}]_{\eta}}\hat{q}\cap X,
\end{align}
where $\hat{q}$ is the quantization region corresponding to the quantized measurement $q\in[\mathbb{R}^{n}]_{\eta}$. Note that the set $\hat{X}$ includes the set $X\backslash(\bigcup_{q\in[X]_{\eta}}\hat{q})$. If $X=\mathbb{R}^{n}$, we can define $\hat{X}:=\bigcup_{q\in[\mathbb{R}^{n}]_{\eta}}\hat{q}$.

In the following, the approximation of the  input set $U_{1}$ is presented. We approximate $U_{1}$ by means of the set
\begin{equation}
\label{eqn-9}
U_{2}:=\bigcup_{\tilde{q}\in\hat{X}}U_{2}(\tilde{q}),
\end{equation}
where $U_{2}(\tilde{q})$ captures the set of inputs applied at the state $\tilde{q}\in\hat{X}$. Here, $\tilde{q}$ plays the same role as the quantization region $\hat{q}$. If $\hat{q}\in\hat{X}$, then $\tilde{q}=\hat{q}$; otherwise, $\tilde{q}=\hat{q}\bigcap X$. In addition, the quantized measurement $q\in[\mathbb{R}^{n}]_{\eta}$ corresponding to $\hat{q}$ is treated as the quantized measurement corresponding to $\tilde{q}$.

\subsection{Symbolic Model}
\label{sec-nondelaysymbolic}

With the approximation of the state and input sets, the symbolic abstraction of the system $T_{\tau}(\Sigma)$ is developed in this subsection. The developed symbolic abstraction is a transition system $T_{\tau, \eta}(\Sigma)=(X_{2}, X^{0}_{2}, U_{2}, \Delta_{2}, Y_{2}, H_{2})$, where
\begin{itemize}
\item the set of states is $X_{2}=\hat{X}$, which is given in \eqref{eqn-8};
\item the set of initial states is $X^{0}_{2}=\hat{X}$;
\item the set of inputs is $U_{2}=\bigcup_{\tilde{q}\in\hat{X}}U_{2}(\tilde{q})$;
\item the transition relation is given as follows: for $\tilde{q}_{1}, \tilde{q}_{2}\in X_{2}$ and $u\in U_{2}$, $\tilde{q}_{2}\in\Delta_{2}(\tilde{q}_{1}, u)$ if and only if
\begin{align}
\label{eqn-10}
\tilde{q}_{2}&\bigcap\left(\mathbf{x}(\tau, q_{1}, u)+\left\llbracket-\theta_{1}e^{L_{1}\tau}\bar{q}_{1}, \theta_{1}e^{L_{1}\tau}\bar{q}_{1}\right\rrbracket\right)\neq\varnothing,
\end{align}
where $q_{1}$ is the quantized measurement corresponds to $\tilde{q}_{1}$; $\theta_{1}:=\eta(1-\eta)^{-1}$; $\bar{q}_{1}:=|q_{1}|+E_{q_{1}}$; $E_{q_{1}}\in\mathbb{R}^{n}$ is a vector whose components are 1 if the corresponding components of $q_{1}$ are 0, and 0 otherwise; and $L_{1}>0$ is the Lipschitz constant of $f$ in $\tilde{q}_{1}\in X_{2}$;
\item the set of outputs is $Y_{2}=X$;
\item the output map is $H_{2}=\Id_{X_{2}}$.
\end{itemize}

In the construction of the symbolic abstraction $T_{\tau, \eta}(\Sigma)$, the technique applied in \eqref{eqn-10} is similar to those in \citep{Kim2017symbolic, Meyer2018compositional, Reissig2017feedback}, where the overapproximation of successors of states is applied. In \eqref{eqn-10}, $L_{1}$ is the Lipschitz constant for the current abstract state $\tilde{q}_{1}\in\hat{X}$ and can be computed accordingly, and further used to determine the next abstract state $\tilde{q}_{2}$. In addition, $\theta_{1}e^{L_{1}\tau}\bar{q}_{1}$ plays the same role as the growth bound in \citep{Reissig2017feedback}, and the transition relation \eqref{eqn-10} is similar to that for the sparse abstraction in \citep{Kim2017symbolic}. Since the logarithmic quantizer is implemented here, $\theta_{1}e^{L_{1}\tau}\bar{q}_{1}$ is related to the quantized measurements, which is different from \citep{Reissig2017feedback}. In particular, the quantized measurements are not the centers of the corresponding quantization regions; see Section \ref{sec-example}. Hence, the developed symbolic abstraction provides an alternative for the construction of symbolic abstractions with respect to earlier works \citep{Pola2008approximately, Reissig2017feedback}.

For the transition systems $T_{\tau}(\Sigma)$ and $T_{\tau, \eta}(\Sigma)$, the next theorem establishes the feedback refinement relation between them.

\begin{theorem}
\label{thm-1}
Consider the system $\Sigma$ with the time and state space sampling parameters $\tau, \eta\in\mathbb{R}^{+}$. Let  the map $\mathcal{F}: X_{1}\rightarrow X_{2}$ be given by $\mathcal{F}(x)=\tilde{q}$ if and only if $x\in\tilde{q}$. Then $T_{\tau}(\Sigma)\preceq_{\mathcal{F}}T_{\tau, \eta}(\Sigma)$.
\end{theorem}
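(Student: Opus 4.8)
The plan is to verify directly the two defining conditions of the feedback refinement relation in Definition \ref{def-4} for the relation $\mathcal{F}$. Throughout I fix a pair $(x_{1}, \tilde{q}_{1})\in\mathcal{F}$, so that $x_{1}\in\tilde{q}_{1}$ by definition of $\mathcal{F}$, and I let $q_{1}$ denote the quantized measurement associated with $\tilde{q}_{1}$. For condition (i) I would first observe that, by the definition of the input set $U_{1}$ of $T_{\tau}(\Sigma)$, the successor $\mathbf{x}(\tau, x_{1}, u)$ is defined for every $u\in U_{1}$ and every $x_{1}\in X$; hence $\Delta_{1}(x_{1}, u)\neq\varnothing$ for all $u\in U_{1}$, i.e. $U_{1}(x_{1})=U_{1}$. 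Since $U_{2}$ is assembled in \eqref{eqn-9} from inputs drawn at the abstract states and therefore satisfies $U_{2}\subseteq U_{1}$, we obtain $U_{2}(\tilde{q}_{1})\subseteq U_{2}\subseteq U_{1}=U_{1}(x_{1})$, which is exactly condition (i). This step is routine.

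The substance lies in condition (ii). Taking $u\in U_{2}(\tilde{q}_{1})$, let $x':=\Delta_{1}(x_{1}, u)=\mathbf{x}(\tau, x_{1}, u)$ be the unique successor in $T_{\tau}(\Sigma)$. Because the closed quantization regions overlap, $\mathcal{F}$ is set-valued, so I must show that \emph{every} $\tilde{q}_{2}$ with $x'\in\tilde{q}_{2}$ lies in $\Delta_{2}(\tilde{q}_{1}, u)$. By the transition relation \eqref{eqn-10}, it suffices to prove that $x'$ itself belongs to the over-approximation cell centered at $\mathbf{x}(\tau, q_{1}, u)$: then $x'$ witnesses the required nonempty intersection $\tilde{q}_{2}\cap(\cdots)\neq\varnothing$ simultaneously for every such $\tilde{q}_{2}$, which disposes of the multivaluedness of $\mathcal{F}$ at once. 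Thus everything reduces to the estimate $|\mathbf{x}(\tau, x_{1}, u)-\mathbf{x}(\tau, q_{1}, u)|\leq\theta_{1}e^{L_{1}\tau}\bar{q}_{1}$, read component-wise as in \eqref{eqn-10}.

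I would obtain this in two steps. First, set $e(t):=\mathbf{x}(t, x_{1}, u)-\mathbf{x}(t, q_{1}, u)$; since both trajectories are driven by the same input, $\dot{e}(t)=f(\mathbf{x}(t, x_{1}, u), u(t))-f(\mathbf{x}(t, q_{1}, u), u(t))$, and the Lipschitz assumption of Definition \ref{def-1} with constant $L_{1}$ on $\tilde{q}_{1}$ gives $\|\dot{e}(t)\|\leq L_{1}\|e(t)\|$, so Gronwall's inequality yields $\|e(\tau)\|\leq e^{L_{1}\tau}\|e(0)\|=e^{L_{1}\tau}\|x_{1}-q_{1}\|$. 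Second, I would bound the initial quantization error from the logarithmic-quantizer geometry: for each coordinate with $q_{1,i}\neq0$, membership $x_{1}\in\tilde{q}_{1}$ together with \eqref{eqn-3} gives $|x_{1,i}-q_{1,i}|\leq\theta_{1}|q_{1,i}|$, while for a deadzone coordinate $q_{1,i}=0$ the extra unit in $\bar{q}_{1}=|q_{1}|+E_{q_{1}}$ is introduced precisely to absorb the deadzone radius $z_{\min}$; checking that $\bar{q}_{1,i}=1$ indeed dominates this radius is part of the bookkeeping and may require a mild standing bound on $d$.

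The hard part will be reconciling the scalar, infinity-norm Gronwall bound with the coordinatewise, measurement-dependent radii $\theta_{1}e^{L_{1}\tau}\bar{q}_{1,i}$ in \eqref{eqn-10}: the scalar estimate propagates the worst initial coordinate uniformly across all axes, whereas \eqref{eqn-10} permits a different radius per axis, and for a coordinate $i$ with $\bar{q}_{1,i}<\max_{j}\bar{q}_{1,j}$ the uniform bound need not fit inside the cell. I would resolve this by replacing the scalar step with a coordinatewise growth-bound propagation in the spirit of \citep{Reissig2017feedback} — integrating $\tfrac{d}{dt}|e_{i}|\leq L_{1}\|e\|$ against the coordinatewise initial data rather than its maximum — so that the factor $e^{L_{1}\tau}$ multiplies each $\bar{q}_{1,i}$ separately, exactly matching the cell in \eqref{eqn-10}. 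Once the component-wise containment of $x'$ is established, condition (ii) follows, and together with condition (i) this proves $T_{\tau}(\Sigma)\preceq_{\mathcal{F}}T_{\tau, \eta}(\Sigma)$.
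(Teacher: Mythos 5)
Your overall route is the same as the paper's: verify condition (i) of Definition \ref{def-4} from the definitions of $U_{1}$ and $U_{2}$, and verify condition (ii) by combining a Gronwall/Lipschitz estimate on the deviation between the trajectories from $x_{1}$ and from $q_{1}$ with the geometry of the logarithmic quantization regions, so that $\mathbf{x}(\tau,x_{1},u)$ witnesses the nonempty intersection in \eqref{eqn-10}. Your treatment of condition (i) is in fact cleaner (you observe $U_{1}(x_{1})=U_{1}$ directly, where the paper argues by contradiction), and you are right that the multivaluedness of $\mathcal{F}$ is disposed of by showing the single successor $x'$ lies in the over-approximation cell. You have also correctly identified two points that the paper's proof passes over silently: the deadzone coordinates (where $\bar{q}_{1,i}=1$ must dominate $z_{\min}=d/(1+\eta)$, i.e.\ one needs $d/(1+\eta)\leq\theta_{1}$, a standing relation between $d$ and $\eta$ that is never stated), and the mismatch between the scalar infinity-norm estimate and the anisotropic cell $\llbracket-\theta_{1}e^{L_{1}\tau}\bar{q}_{1},\theta_{1}e^{L_{1}\tau}\bar{q}_{1}\rrbracket$.

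The gap is in your proposed repair of that second point. You cannot integrate $\tfrac{d}{dt}|e_{i}|\leq L_{1}\|e\|$ ``against the coordinatewise initial data'' to obtain $|e_{i}(\tau)|\leq e^{L_{1}\tau}|e_{i}(0)|$: the right-hand side couples the coordinates through $\|e\|$, and the best you get from the scalar Lipschitz hypothesis of Definition \ref{def-1} is $|e_{i}(\tau)|\leq|e_{i}(0)|+(e^{L_{1}\tau}-1)\|e(0)\|$, which still injects the worst initial coordinate into every axis and therefore does not fit inside $\theta_{1}e^{L_{1}\tau}\bar{q}_{1,i}$ when $\bar{q}_{1,i}<\max_{j}\bar{q}_{1,j}$. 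A genuinely componentwise propagation in the style of \citep{Reissig2017feedback} needs a matrix-valued Lipschitz bound and yields a matrix exponential acting on the initial error vector, not the factor $e^{L_{1}\tau}$ applied to each $\bar{q}_{1,i}$ separately. So the ``hard part'' you flag remains open in your sketch. For what it is worth, the paper's own proof does not resolve it either: it derives only the scalar bound $\|\mathbf{x}(\tau,x_{1},u)-\mathbf{x}(\tau,q_{1},u)\|\leq\theta_{1}e^{L_{1}\tau}\|q_{1}\|$ and then asserts the nonempty intersection with the anisotropic cell, which is exactly the inference you (rightly) refused to make without justification.
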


\begin{proof}
Following from the definitions of the systems $T_{\tau}(\Sigma)$ and $T_{\tau, \eta}(\Sigma)$, we have that $U_{2}\subseteq U_{1}$. Let $(x_{1}, \tilde{q}_{1})\in\mathcal{F}$ with $x_{1}\in X_{1}$ and $\tilde{q}_{1}\in X_{2}$, then $x_{1}\in\tilde{q}_{1}$ holds from the feedback refinement relation. For each $u\in U_{2}(\tilde{q}_{1})$, we obtain that $u\in U_{2}(\tilde{q}_{1})\subseteq U_{2}\subseteq U_{1}$. Moreover, $\Delta_{2}(\tilde{q}_{1}, u)\neq\varnothing$ holds from the definition of $U_{2}(\tilde{q}_{1})$. If $\Delta_{1}(x_{1}, u)=\varnothing$, then we have that $u\notin U_{1}$, which is a contradiction. Hence, $\Delta_{1}(x_{1}, u)\neq\varnothing$, which implies that $u\in U_{1}(x_{1})$. We thus conclude that $U_{2}(\tilde{q}_{1})\subseteq U_{1}(x_{1})$.

Given $\tilde{q}_{1}, \tilde{q}_{2}\in X_{2}$ and $u\in U_{2}(\tilde{q}_{1})$, define $x_{2}:=\Delta_{1}(x_{1}, u)$. Since $(x_{1}, \tilde{q}_{1})\in\mathcal{F}$ holds, we have that $x_{1}\in\tilde{q}_{1}$, which implies from the logarithmic quantizer \eqref{eqn-2} that $\|x_{1}-q_{1}\|\leq\theta_{1}\|q_{1}\|$. If $\Delta_{1}(x_{1}, u)\cap\tilde{q}_{2}\neq\varnothing$, then there exists $\mathbf{x}(\tau, x_{1}, u)\in\tilde{q}_{2}$. Since $f$ in $\Sigma$ satisfies the Lipschitz condition, one has that $\|\mathbf{x}(\tau, x_{1}, u)-\mathbf{x}(\tau, q_{1}, u)\|\leq e^{L_{1}\tau}\|x_{1}-q_{1}\|\leq\theta_{1}e^{L_{1}\tau}\|q_{1}\|$, which further implies that $\tilde{q}_{2}\bigcap(\mathbf{x}(\tau, q_{1}, u)+\llbracket-\theta_{1}e^{L_{1}\tau}\bar{q}_{1}, \theta_{1}e^{L_{1}\tau}\bar{q}_{1}\rrbracket)\neq\varnothing$. We thus obtain from the construction of the abstraction $T_{\tau, \eta}(\Sigma)$ that $\tilde{q}_{2}\in X_{2}$, which in turn implies that $(x_{2}, \tilde{q}_{2})\in\mathcal{F}$.
\hfill$\blacksquare$
\end{proof}

In the proof of Theorem \ref{thm-1}, the intersection $\Delta_{1}(x_{1}, u)\cap\tilde{q}_{2}\neq\varnothing$ holds if the state set $X$ is unbounded. If $X$ is bounded as in practical systems, we have that $\Delta_{1}(x_{1}, u)\cap\tilde{q}_{2}=\varnothing$ holds in $\mathbb{R}^{n}\backslash X$. In this case, we can impose an additional requirement such that $\Delta_{2}(\tilde{q}_{1}, u)=\varnothing$ if $\tilde{q}_{1}\notin X_{2}$. Therefore, the feedback refinement relation is still valid in this case; see also \citep{Ren2019logarithmic, Reissig2017feedback, Meyer2018compositional}. In addition, since the approximation error has great effects on the accuracy of the developed symbolic model, we can reduce the approximation error by implementing static or dynamic quantization on the components of $X_{2}$, which will be discussed in the next section. For instance, using another static quantization on $\tilde{q}\in X_{2}$, smaller quantization regions are obtained and then the approximation accuracy is improved. Note that such setting does not need to rediscretize the state and input sets, and can be treated as an abstraction refinement strategy.

\section{Symbolic Model for Time-delay Case}
\label{sec-timedelay}

In this section, we study the construction of the symbolic model for time-delay control systems. In this case, assume that the controller is digital, i.e., the control inputs are piecewise-constant. In many practical applications, the controllers are implemented through digital devices, which results in digital control inputs; see \citep{Majumdar2012approximately}. In the following, we refer to the time-delay systems with digital controllers as digital time-delay control systems. Suppose that the input set $U$ contains the origin, and that the control inputs belong to the set $\mathcal{U}_{\tau}:=\{u\in\mathcal{U}: u: [-r, -r+\tau]\rightarrow U \text{ and } u(t)=u(-r), \forall t\in[-r, -r+\tau]\}$, where $\tau\in\mathbb{R}^{+}$ is the sampling period. Given the system $\bar{\Sigma}$, define the transition system $T_{\tau}(\bar{\Sigma}):=(\bar{X}_{1}, \bar{X}^{0}_{1}, \bar{U}_{1}, \bar{\Delta}_{1}, \bar{Y}_{1}, \bar{H}_{1})$ with
\begin{itemize}
  \item the state set $\bar{X}_{1}:=\mathcal{X}$;
  \item the set of initial states $\bar{X}^{0}_{1}:=\xi_{0}$, which is given in \eqref{eqn-1};
  \item the input set $\bar{U}_{1}:=\{u\in\mathcal{U}_{\tau}: \mathbf{x}(\tau, x, u) \text{ is defined for } x\in\mathcal{X}\}$;
  \item the transition relation given as follows: for $x\in\bar{X}_{1}$ and $u\in\bar{U}_{1}$, $x'=\bar{\Delta}_{1}(x, u)$ if and only if $x'=\mathbf{x}(\tau, x, u)$;
  \item the output set $\bar{Y}_{1}:=\mathcal{X}$;
  \item the output map $\bar{H}_{1}: \bar{X}_{1}\hookrightarrow\bar{X}_{1}$.
\end{itemize}

Similar to the delay-free case, the system $T_{\tau}(\bar{\Sigma})$ can be treated as a time discretization of the system $\bar{\Sigma}$. Note that $T_{\tau}(\bar{\Sigma})$ is not symbolic, since both $\bar{X}_{1}$ and $\mathcal{U}_{\tau}$ are functional spaces.

\subsection{Dynamic Spline-based Approximation}
\label{subsec-delayapprox}

To approximate the functional spaces $\mathcal{X}$ and $\mathcal{U}_{\tau}$, both logarithmic and zoom quantizers are applied in this subsection. The detailed approximation is explained below.

To begin with, we consider the delay-free version of the system $\bar{\Sigma}$, that is,
\begin{align}
\label{eqn-11}
\begin{aligned}
\dot{x}(t)&=f(x(t), u(t)), \quad t\in\mathbb{R}^{+}, x(0)=\xi_{0}(0),
\end{aligned}
\end{align}
where $x(t)\in X\subseteq\mathbb{R}^{n}$ is the system state, $u(t)\in\bar{\mathcal{U}}_{\tau}\subseteq\mathbb{R}^{m}$ is the control input, and $\bar{\mathcal{U}}_{\tau}$ denotes the set $\mathcal{U}_{\tau}$ with $r\equiv0$. From Subsection \ref{subsec-nonapprox}, we can obtain the approximation of the state set $X$ and the input set $\bar{\mathcal{U}}_{\tau}$. That is, there exists $\eta\in\mathbb{R}^{+}$ such that the state set $X$ is partitioned by the sequence of embedded lattices $[X]_{\eta}$. As a result, the state set $X$ and the input set $\bar{\mathcal{U}}_{\tau}$ are partitioned as $\hat{X}$ in \eqref{eqn-8} and $U_{2}$ in \eqref{eqn-9}, respectively.

\begin{figure}[!t]
\begin{center}
\begin{picture}(70, 85)
\put(-50,-19){\resizebox{60mm}{35mm}{\includegraphics[width=2.5in]{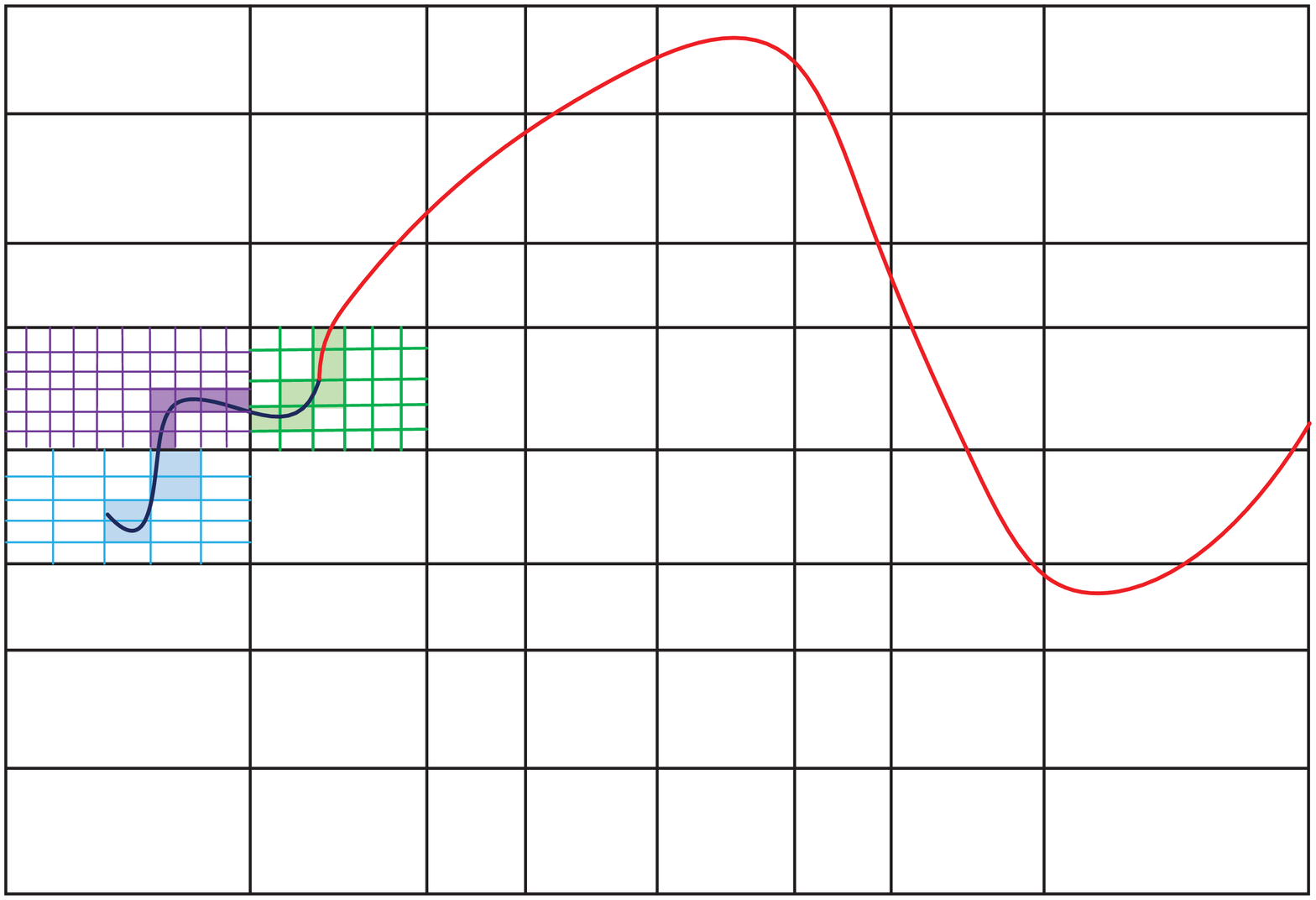}}}
\end{picture}
\end{center}
\caption{Illustration of the spline-based approximation scheme for a functional space. The black lines are obtained from logarithmic quantization. The time-delay initial condition (i.e., the black curve) is covered by 3 quantization regions. For each quantization region, a zoom quantizer is applied for further partition. The smaller quantization regions are thus used to approximate the time-delay initial condition; see the shadow regions.}
\label{fig-1}
\end{figure}

Next, based on the defined partition in \eqref{eqn-8}-\eqref{eqn-9}, we consider the system $\bar{\Sigma}$. For the time-delay initial state $\xi_{0}(t)$ with $t\in[-\Theta, 0]$, there exists a union of finite regions $\bigcup_{1\leq i\leq N_{0}}\tilde{q}_{i}\subseteq\hat{X}$ with $\tilde{q}_{i}\in\hat{X}$ and finite $N_{0}\in\mathbb{N}^{+}$ such that $\xi_{0}(t)\in\bigcup_{1\leq i\leq N_{0}}\tilde{q}_{i}$ for all $t\in[-\Theta, 0]$; see Figure \ref{fig-1}. To see this, assume that there exists a $t_{1}\in(-\Theta, 0]$ such that
\begin{equation}
\label{eqn-12}
\xi_{0}(t)\in\tilde{q}_{1}\in\hat{X}, \quad t\in[-\Theta, t_{1}].
\end{equation}
Now, for the bounded region $\tilde{q}_{1}\in\hat{X}$, the zoom quantizer \eqref{eqn-7} is applied to obtain the approximation of $\tilde{q}_{1}$. Assume that the quantization parameter $\delta=\delta_{1}\in\mathbb{R}^{+}_{0}$ in the region $\tilde{q}_{1}$. Therefore, the region $\tilde{q}_{1}$ is approximated via the following lattices
\begin{align}
\label{eqn-13}
[\tilde{q}_{1}]_{\delta_{1}}&:=\left\{q\in\tilde{q}_{1}: q_{i}=k_{i}\Lambda\delta_{1}, k_{i}\in\mathcal{M}, i\in\{1, \ldots, n\}\right\}.
\end{align}
That is, the region $\tilde{q}_{1}\in\hat{X}$ is partitioned into smaller regions, i.e., $\bigcup_{q\in[\tilde{q}_{1}]_{\delta_{1}}}\hat{q}\cap X$, where $\hat{q}$ is given in Subsection \ref{subsec-dynquan}. In addition, we can find a finite number of smaller regions to cover the trajectory of $\xi_{0}(t)$ in the interval $[-\Theta, t_{1}]$. Note that $\delta_{1}$ here is allowed to be zero, which is used to indicate that some regions do not need the zoom quantization. That is, given a $\tilde{q}\in\hat{X}$, $[\tilde{q}]_{0}=\tilde{q}$, which implies that we do not need to approximate $\tilde{q}$.

Similarly, we can find a $t_{2}\in(t_{1}, 0]$ such that $\xi_{0}(t)\in\tilde{q}_{2}\in\hat{X}$ for all $t\in[t_{1}, t_{2}]$. For the region $\tilde{q}_{2}\in\hat{X}$, the zoom quantizer \eqref{eqn-7} is applied with the quantization parameter $\delta=\delta_{2}\in\mathbb{R}^{+}_{0}$, and thus we approximate the region $\tilde{q}_{2}\in\hat{X}$ via the lattices $[\tilde{q}_{2}]_{\delta_{2}}$ with a similar form as in \eqref{eqn-13}. Note that $\delta_{1}$ and $\delta_{2}$ are not necessarily the same, and the choice of the quantization parameter in each region depends on the requirements of the desired approximation accuracy and the state set. For instance, if the approximation error is getting larger in the region $\tilde{q}_{1}$, then we can contract $\delta_{2}$ to improve the approximation accuracy in $\tilde{q}_{2}$; in case that obstacle avoidance is required, $\delta$ can be contracted to generate more admissible symbolic states. The approximation accuracy increases with the decrease of $\delta$.

Using the above technique iteratively, we approximate these components in $\hat{X}$ covering the initial functional $\xi_{0}\in\mathcal{C}^{0}([-\Theta, 0], X)$ by the lattices $\bigcup_{1\leq i\leq N_{0}}[\tilde{q}_{i}]_{\delta_{i}}$. Since the initial functional is known \textit{a priori} and the approximation $\hat{X}$ is obtained via the logarithmic quantizer, a finite  $N_{0}\in\mathbb{N}^{+}$ exists, and thus the above procedure ends after $N_{0}$ iterations.

For the evolution of the system state, we can use a similar mechanism and approximate the corresponding region in $\hat{X}$. As a result, combining both the logarithmic quantizer \eqref{eqn-2} and the zoom quantizer \eqref{eqn-7}, we define the set $\bar{X}:=\bigcup_{\tilde{q}_{i}\in\hat{X}}[\tilde{q}_{i}]_{\delta_{i}}$, and further partition the state set $X$ as
\begin{align}
\label{eqn-14}
\check{X}&:=\bigcup_{\tilde{q}_{i}\in\hat{X}}\bigcup_{q\in[\tilde{q}_{i}]_{\delta_{i}}}\hat{q}\cap X,
\end{align}
where $\tilde{q}_{i}\in\hat{X}$ is given in \eqref{eqn-8} and is determined by the logarithmic quantizer; $\delta_{i}\in\mathbb{R}^{+}_{0}$ is the quantization parameter for the zoom quantizer applied in the region $\tilde{q}_{i}\in\hat{X}$. Similarly, we can define the partition of the input set $\bar{U}_{1}$ as
\begin{equation}
\label{eqn-15}
\bar{U}_{2}:=\bigcup_{\tilde{q}\in\check{X}}\bar{U}_{2}(\tilde{q}), \quad i\in\mathbb{N}^{+},
\end{equation}
where $\bar{U}_{2}(\hat{q})$ captures the set of inputs that can be applied at the state $\tilde{q}\in\check{X}$ of the symbolic model.

\begin{remark}
\label{rmk-1}
This step is a combination of logarithmic and zoom quantization, and thus a finer approximation is obtained for the state and input sets than the delay-free case. In terms of the delay-free case, this step can be treated as a refinement for the approximation obtained in Subsection \ref{subsec-nonapprox}, and thus the approximation accuracy is improved via the combination of logarithmic and zoom quantization. To be specific, if the symbolic abstraction proposed in Subsection \ref{subsec-nonapprox} is coarse, then the combination of logarithmic and zoom quantization can be applied to refine the state set further to improve the approximation accuracy; see also \citep{Meyer2018compositional1}. In terms of the time-delay case, this setup lays a solid foundation for the following approximation of the functional spaces. In particular, the refined approximation of the state and input sets will be used to determine both the abstract state and the transition relation in the construction of symbolic abstractions for the time-delay case.
\hfill $\square$
\end{remark}

Finally, we use the partition $\check{X}$ to approximate the functional set $\mathcal{X}$. Given a constant $N\in\mathbb{N}^{+}$, consider the following spline functions (see also \citep{Schultz1973spline, Michel2013lectures}):
\begin{align*}
s_{0}(t)&:=\left\{\begin{aligned}
&1-(t-a)/h, &\ & t\in[a, a+h]; \\
&0, &\ &  \text{otherwise},
\end{aligned}\right. \\
s_{j}(t)&:=\left\{\begin{aligned}
&1-j+(t-a)/h, &\ & t\in[a+(j-1)/h, a+jh];\\
&1+j-(t-a)/h, &\ & t\in[a+jh, a+(j+1)/h]; \\
&0, &\ &  \text{otherwise},
\end{aligned}\right. \\
s_{N+1}(t)&:=\left\{\begin{aligned}
&1-(t-b)/h, &\ & t\in[b-h, b]; \\
&0, &\ &  \text{otherwise},
\end{aligned}\right.
\end{align*}
where $h:=(b-a)/(N+1)$. With these spline functions $s_{j}$, $j\in\{0, \ldots, N+1\}$, the set $\mathcal{X}$ can be approximated below.

Consider a set $\mathcal{X}\subseteq\mathcal{C}^{0}([a, b], X)$ with $X\subseteq\mathbb{R}^{n}$ and $[a, b]\subseteq\mathbb{R}$. Given a function $x\in\mathcal{X}$, we approximate $x\in\mathcal{X}$ by means of the union of the regions in $\check{X}$ covering $x\in\mathcal{X}$. Such union is obtained by the combination of $N+2$ spline functions $s_{j}$ centered at time $t=a+ih$ with the regions $\tilde{q}_{j}\in\check{X}$, which are defined as $\tilde{q}_{j}=\arg\{\tilde{q}\in\check{X}: x(a+jh)\in\tilde{q}\}$. Define an operator $\psi_{2}: \mathcal{X}\rightarrow\mathcal{C}^{0}([a, b], X)$ as follows:
\begin{equation}
\label{eqn-16}
\psi_{2}(x)(t):=\bigcup_{s_{j}(t)\neq0}\tilde{q}_{j}, \quad t\in[a, b],
\end{equation}
where $\tilde{q}_{j}\in\check{X}$, $x(a+jh)\in\tilde{q}_{j}$ and $j\in\{0, 1, \ldots, N+1\}$. Note that the operator $\psi_{2}$ is not uniquely defined. Let $a=-\Theta$ and $b=0$, and then the approximation of $\mathcal{X}$ is defined as $\bar{X}_{2}:=\psi_{2}(\mathcal{X})$.

In the final step, the spline functions are used to indicate all the regions in $\check{X}$ covering the function $x\in\mathcal{X}$. The number $N+2\in\mathbb{N}^{+}$ of the spline functions is based on the approximation accuracy and the applied zoom quantization. If the quantization parameter of the zoom quantizer is getting smaller, the number of the components in $\check{X}$ increases and thus a large $N$ leads to the improvement of the approximation accuracy due to the implementation of more components in $\check{X}$. Thus, we can write $N$ as $N(\bm{\delta})$ with $\bm{\delta}:=(\delta_{1}, \delta_{2}, \ldots)$. Once $N$ exceeds the threshold $M^{2}$ with $M$ given in \eqref{eqn-7}, the approximation accuracy is not affected because the following may occur: $\tilde{q}_{j}$ and $\tilde{q}_{k}$ are the same for some $j, k\in\{0, \ldots, N+1\}$ and $j\neq k$, which does not affect \eqref{eqn-16} and thus does not affect the approximation accuracy.

\subsection{Symbolic Model}
\label{subsec-delaysymbolic}

With the approximation of the state and input sets, we are ready to construct the symbolic model for the system $T_{\tau}(\bar{\Sigma})$ in this subsection. Given the parameters $\tau, \eta, \bm{\delta}$ as in Subsection \ref{subsec-delayapprox}, the constructed symbolic abstraction is a transition system $T_{\tau, \eta, \bm{\delta}}(\bar{\Sigma})=(\bar{X}_{2}, \bar{X}^{0}_{2}, \bar{U}_{2}, \bar{\Delta}_{2}, \bar{Y}_{2}, \bar{H}_{2})$, where,
\begin{itemize}
\item the set of states is $\bar{X}_{2}=\psi_{2}(\mathcal{X})$ with $\psi_{2}$ given in \eqref{eqn-16};
\item the set of initial states is $\bar{X}^{0}_{2}=\psi_{2}(\xi_{0})$ with $\xi_{0}$ given in \eqref{eqn-1};
\item the set of inputs is $\bar{U}_{2}$ given in \eqref{eqn-15};
\item the transition relation is given as follows: for $\tilde{q}_{1}, \tilde{q}_{2}\in\bar{X}_{2}$ and $u\in \bar{U}_{2}$, $\tilde{q}_{2}\in\bar{\Delta}_{2}(\tilde{q}_{1}, u)$ if and only if
\begin{align}
\label{eqn-17}
\tilde{q}_{2}&\cap\left(\mathbf{x}(\tau, q_{1}, u)+2\theta_{2}e^{L_{2}\tau}\llbracket-E, E\rrbracket\right)\neq\varnothing,
\end{align}
where $q_{1}:=\Sigma^{N+1}_{j=0}q_{1j}s_{j}$ with spline functions $s_{j}$ and quantized measurements $q_{1j}\in\bar{X}$ corresponding to $\tilde{q}_{1j}\in\tilde{q}_{1}$, $\theta_{2}:=\max_{s_{j}\neq0}\{\Lambda\delta_{j}\}$ with $\Lambda$ in \eqref{eqn-7} and the quantization parameters $\delta_{j}$ involved in $\tilde{q}_{1}$, and $L_{2}>0$ is the Lipschitz constant of $f$ in $\tilde{q}_{1}\in\bar{X}_{2}$;
\item the set of outputs is $\bar{Y}_{2}=\mathcal{X}$;
\item the output map is $\bar{H}_{2}=\Id_{\bar{X}_{2}}$.
\end{itemize}

In the construction of the symbolic model, both the spline functions $s_{j}$ and the regions $\tilde{q}_{1j}\in\check{X}$ covering the time-delay state are used in \eqref{eqn-17} to determine the transition relation. That is, for $\tilde{q}_{1}=\bigcup_{s_{j}(t)\neq0}\tilde{q}_{1j}$ with $t\in[-\Theta, 0]$ and $\tilde{q}_{1j}\in\check{X}$, $q_{1}:=\Sigma^{N+1}_{j=0}q_{1j}s_{j}$ is the quantized measurement corresponding to $\tilde{q}_{1}$ and used to approximate the time-delay state and to determine the next abstract state. Due to the combination of logarithmic and zoom quantizers, the growth bound $2\theta_{2}e^{L_{2}\tau}E$ in \eqref{eqn-17} is not related to the $q_{1}$, which is different from \eqref{eqn-10}. Similar to the delay-free case in Section \ref{sec-delayfree}, the following theorem establishes the feedback refinement relation from $T_{\tau}(\bar{\Sigma})$ to $T_{\tau, \eta, \bm{\delta}}(\bar{\Sigma})$.

\begin{theorem}
\label{thm-2}
Consider the system $\bar{\Sigma}$ with the time and state space sampling parameters $\tau, \eta\in\mathbb{R}^{+}, \delta_{i}\in\mathbb{R}^{+}_{0}$, $i\in\mathbb{N}^{+}$. Let  the map $\mathcal{F}: X_{1}\rightarrow X_{2}$ be given by $\mathcal{F}(x)=\tilde{q}$ if and only if $x\in\tilde{q}$. Then $T_{\tau}(\bar{\Sigma})\preceq_{\mathcal{F}}T_{\tau, \eta, \bm{\delta}}(\bar{\Sigma})$.
\end{theorem}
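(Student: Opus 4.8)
The plan is to verify the two defining conditions of the feedback refinement relation in Definition~\ref{def-4} for the pair $(T_{\tau}(\bar{\Sigma}), T_{\tau, \eta, \bm{\delta}}(\bar{\Sigma}))$, following the same skeleton as the proof of Theorem~\ref{thm-1} but replacing the pointwise quantization estimate by a functional one. First I would dispose of condition~(i): by the construction of $\bar{U}_{2}$ in \eqref{eqn-15} one has $\bar{U}_{2}\subseteq\bar{U}_{1}$, and for any $(x_{1}, \tilde{q}_{1})\in\mathcal{F}$ and any $u\in\bar{U}_{2}(\tilde{q}_{1})$ the definition of $\bar{U}_{2}(\tilde{q}_{1})$ forces $\bar{\Delta}_{2}(\tilde{q}_{1}, u)\neq\varnothing$; a contradiction argument identical to the delay-free case (if $\bar{\Delta}_{1}(x_{1}, u)=\varnothing$ then $u\notin\bar{U}_{1}$) then yields $\bar{\Delta}_{1}(x_{1}, u)\neq\varnothing$, whence $\bar{U}_{2}(\tilde{q}_{1})\subseteq\bar{U}_{1}(x_{1})$. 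This part is essentially verbatim and carries no new difficulty.

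The substantive part is condition~(ii). I would fix $(x_{1}, \tilde{q}_{1})\in\mathcal{F}$ and $u\in\bar{U}_{2}(\tilde{q}_{1})$, set $x_{2}:=\bar{\Delta}_{1}(x_{1}, u)=\mathbf{x}(\tau, x_{1}, u)$, and let $\tilde{q}_{2}:=\mathcal{F}(x_{2})$, so that $x_{2}\in\tilde{q}_{2}$ by the definition of $\mathcal{F}$. The goal is to show that $\tilde{q}_{2}$ passes the membership test \eqref{eqn-17}, i.e.\ that $\tilde{q}_{2}\cap(\mathbf{x}(\tau, q_{1}, u)+2\theta_{2}e^{L_{2}\tau}\llbracket-E, E\rrbracket)$ is nonempty, where $q_{1}=\sum_{j=0}^{N+1}q_{1j}s_{j}$ is the spline-interpolated quantized history. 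Exhibiting the single point $x_{2}$ in this intersection suffices, since the infinity-norm ball of radius $\varrho$ equals $\varrho\llbracket-E, E\rrbracket$; thus the whole theorem reduces to the estimate $\|x_{2}-\mathbf{x}(\tau, q_{1}, u)\|\leq 2\theta_{2}e^{L_{2}\tau}$, after which $x_{2}\in\tilde{q}_{2}$ lies in the overapproximation, the intersection is nonempty, $\tilde{q}_{2}\in\bar{\Delta}_{2}(\tilde{q}_{1}, u)$, and $(x_{2}, \tilde{q}_{2})\in\mathcal{F}$.

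The central estimate, and the main obstacle, is bounding the functional distance $\|x_{1}-q_{1}\|$ in the sup-norm over $[-\Theta, 0]$. Here I would exploit the partition-of-unity property $\sum_{j}s_{j}(t)\equiv1$ of the hat functions, writing $x_{1}(t)-q_{1}(t)=\sum_{j}(x_{1}(t)-q_{1j})s_{j}(t)$; since at most two consecutive splines are active at any $t$, and $x_{1}$ lies in one of the two active zoom-regions $\tilde{q}_{1j}$ whose centers $q_{1j}$ are separated by the node error $\tfrac{1}{2}\Lambda\delta_{j}\leq\tfrac{1}{2}\theta_{2}$ plus the width of one adjacent cell, the convexity of the weights $s_{j}(t)\in[0,1]$ collapses the sum to $\|x_{1}-q_{1}\|\leq 2\theta_{2}$ with $\theta_{2}=\max_{s_{j}\neq0}\{\Lambda\delta_{j}\}$. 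Feeding this into a Gronwall-type estimate for the delay dynamics \eqref{eqn-1}, using the Lipschitz constant $L_{2}$ of the functional $f$ on $\tilde{q}_{1}$, gives $\|x_{2}-\mathbf{x}(\tau, q_{1}, u)\|\leq e^{L_{2}\tau}\|x_{1}-q_{1}\|\leq 2\theta_{2}e^{L_{2}\tau}$, which closes the argument. I expect the delicate point to be precisely this Gronwall step: unlike Theorem~\ref{thm-1}, the \emph{history} enters $f$, so the error at time $\tau$ is driven by the whole initial-function discrepancy rather than a single state, and one must check that the piecewise-linear interpolant $q_{1}$ is itself an admissible element of $\mathcal{X}$ on which the bound $L_{2}$ applies and that the factor $2$ absorbing the two active splines propagates cleanly through the trajectory.
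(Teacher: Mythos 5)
Your proposal follows essentially the same route as the paper's proof: condition (i) is handled by the identical contradiction argument, and condition (ii) reduces, exactly as in the paper, to the bound $\|x_{1}-q_{1}\|\leq2\theta_{2}$ followed by the Lipschitz/Gronwall estimate $\|\mathbf{x}(\tau,x_{1},u)-\mathbf{x}(\tau,q_{1},u)\|\leq e^{L_{2}\tau}\|x_{1}-q_{1}\|\leq2\theta_{2}e^{L_{2}\tau}$, which places $x_{2}\in\tilde{q}_{2}$ in the overapproximation of \eqref{eqn-17}. The only difference is that you justify the $2\theta_{2}$ bound via the partition-of-unity property of the splines, whereas the paper simply asserts $x_{1}\in\llbracket q_{1}-2\theta_{2}E, q_{1}+2\theta_{2}E\rrbracket$ "from the zoom quantization"; your elaboration (and your caveat about $q_{1}$ possibly leaving $\tilde{q}_{1}$) matches the remark the authors make immediately after the theorem.
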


\begin{proof}
It follows from the definitions of $T_{\tau}(\bar{\Sigma})$ and $T_{\tau, \eta, \bm{\delta}}(\bar{\Sigma})$ that $\bar{U}_{2}\subseteq\bar{U}_{1}$. Let $(x_{1}, \tilde{q}_{1})\in\mathcal{F}$ with $x_{1}\in\bar{X}_{1}$ and $\tilde{q}_{1}\in\bar{X}_{2}$, then we have that $x_{1}\in\tilde{q}_{1}$. For each $u\in\bar{U}_{2}(\tilde{q}_{1})$, we obtain that $u\in\bar{U}_{2}(\tilde{q}_{1})\subseteq \bar{U}_{2}\subseteq\bar{U}_{1}$. From the definition of $\bar{U}_{2}(\tilde{q}_{1})$, it follows that $\bar{\Delta}_{2}(\tilde{q}_{1}, u)\neq\varnothing$. If $\bar{\Delta}_{1}(x_{1}, u)=\varnothing$, then we have that $u\notin\bar{U}_{1}$, which is a contradiction. As a result, $\bar{\Delta}_{1}(x_{1}, u)\neq\varnothing$ and $u\in\bar{U}_{1}(x_{1})$, which in turn implies that $\bar{U}_{2}(\tilde{q}_{1})\subseteq\bar{U}_{1}(x_{1})$.

Given $\tilde{q}_{1}, \tilde{q}_{2}\in\bar{X}_{2}$ and $u\in\bar{U}_{2}(\tilde{q}_{1})$, define $x_{2}:=\bar{\Delta}_{1}(x_{1}, u)$. It follows from $(x_{1}, \tilde{q}_{1})\in\mathcal{F}$ that $x_{1}\in\tilde{q}_{1}$. In addition, it follows from \eqref{eqn-16} that $\tilde{q}_{1}=\psi_{2}(x_{1})$ with $q_{1}=\Sigma^{N}_{j=0}q_{1j}s_{j}$. From the zoom quantization, we have that $x_{1}\in\llbracket q_{1}-2\theta_{2}E, q_{1}+2\theta_{2}E\rrbracket$, which implies that $\|x_{1}-q_{1}\|\leq2\theta_{2}$. If $\bar{\Delta}_{1}(x_{1}, u)\cap\tilde{q}_{2}\neq\varnothing$, then there exists $\mathbf{x}(\tau, x_{1}, u)\in\tilde{q}_{2}$. Since $f$ in $\bar{\Sigma}$ satisfies the Lipschitz condition, one has that $\|\mathbf{x}(\tau, x_{1}, u)-\mathbf{x}(\tau, q_{1}, u)\|\leq e^{L_{2}\tau}\|x_{1}-q_{1}\|\leq2\theta_{2}e^{L_{2}\tau}$, which further indicates that $\tilde{q}_{2}\cap(\mathbf{x}(\tau, q_{1}, u)+2\theta_{2}e^{L_{2}\tau}\llbracket-E, E\rrbracket)\neq\varnothing$. It follows from the definition of the symbolic model $T_{\tau, \eta, \bm{\delta}}(\bar{\Sigma})$ that $\tilde{q}_{2}\in\bar{X}_{2}$, which in turn implies that $(x_{2}, \tilde{q}_{2})\in\mathcal{F}$.
\hfill$\blacksquare$
\end{proof}

Note that the overapproximation technique is applied in the transition relation \eqref{eqn-17}. Since the value $q_{1}=\Sigma^{N+1}_{j=0}q_{1j}s_{j}$ is used to approximate the time-delay state, a possible case is that $q_{1}$ does no belong to the region $\tilde{q}_{1}$. In this case, in order to measure the distance between $x_{1}$ and $q_{1}$, we overapproximate the growth bound in \eqref{eqn-17}, which thus provides more transitions in $T_{\tau, \eta, \bm{\delta}}(\bar{\Sigma})$.

\begin{remark}
\label{rmk-2}
In Theorems \ref{thm-1} and \ref{thm-2}, the feedback refinement relation is established for both the delay-free and time-delay cases. However, the techniques applied in the delay-free and time-delay cases are different. In the delay-free case, only logarithmic quantization is implemented to approximate the state set and to determine the transition relation in the symbolic model. In the time-delay case, the combination of logarithmic and zoom quantization is applied, which cannot determine the abstract sets and just lays the foundation for the approximation of the functional spaces. Besides the combination of logarithmic and zoom quantization, the spline functions are involved to determine the abstract states and the transition relation in the time-delay case. As a result, the techniques applied in the time-delay case are more complex, and lead to a refinement for the approximation in the delay-free case; see Remark \ref{rmk-1}.
\hfill $\square$
\end{remark}

\section{Numerical Example}
\label{sec-example}

In this section, a numerical example is presented to illustrate the obtained results. Consider a robotic arm that follows the Lagrange dynamics \citep{Lewis1998neural}:
\begin{equation}
\label{eqn-18}
M(\bm{p})\ddot{\bm{p}}+C(\bm{p}, \dot{\bm{p}})\dot{\bm{p}}+g(\bm{p})=W.
\end{equation}
Let $M(\bm{p})=1$, $C(\bm{p}, \dot{\bm{p}})=l$, $g(\bm{p})=m\sin(\bm{p})$, $W=u$ and define $x_{1}=\bm{p}$, $x_{2}=\dot{\bm{p}}$, then we have (see \citep{Pola2008approximately})
\begin{align}
\label{eqn-19}
\Sigma: \dot{x}_{1}=x_{2}, \quad \dot{x}_{2}=-m\sin(x_{1})-lx_{2}+u,
\end{align}
where $x_{1}$ and $x_{2}$ are respectively the angular position and rotational velocity, and $u$ is the torque which is treated as the control variable. Let $m=1.96$ and $l=1.5$. Assume that the state set is $X=[-1, 1]\times[-1, 1]$, and the input set is $U=[-2.5, 2.5]$. The applied logarithmic quantizer is given by
\begin{align}
\label{eqn-20}
Q_{1}(z):=\left\{\begin{aligned}
&\frac{(1+\eta)^{k+1}a}{(1-\eta)^{k}},  & & \frac{(1+\eta)^{k}a}{(1-\eta)^{k}}<z\leq\frac{(1+\eta)^{k+1}a}{(1-\eta)^{k+1}}; \\
&0, & & 0\leq z\leq a;  \\
& -Q(-z), & & z<0.
\end{aligned}\right.
\end{align}
To compare with the existing work \citep{Pola2008approximately}, we choose $\eta=0.2$, $a=0.4$ and $\tau=0.2$. By computation, we have that $L_{1}=6$.

According to the approximation approach in Section \ref{sec-nondelaysymbolic}, the resulting transition system is $T_{0.2, 0.2}(\Sigma)=(X_{2}, X^{0}_{2}, U_{2}, \Delta_{2}, Y_{2})$ with: (i) $X_{2}=\{\hat{q}: q=(q_{1}, q_{2}), q_{1}, q_{2}\in\{-0.72, -0.48, 0, 0.48, 0.72\}\}$; (ii) $X^{0}_{2}=X_{2}$; (iii) $U_{2}=[U]_{0.2}$; (iv) the transition relation $\Delta_{2}$ is depicted in Figure \ref{fig-2}; (v) $Y_{2}=X_{2}$; (vi) $H_{2}=\Id_{X_{2}}$. Comparing with the uniform quantization based abstraction in \citep{Pola2008approximately}, there are more (loop) transitions in $T_{0.2, 0.2}(\Sigma)$ emanating from the abstract states, which further implies that some complexity issues can be avoided; see \citep{Reissig2014feedback}. For instance, if the state 21 in Figure \ref{fig-2} is an equilibrium and is in the target set, then using the abstraction in \citep{Pola2008approximately}, no loop transition emanating from the equilibrium leads back to the equilibrium, which may result in complexity issues in terms of the refinement of abstract controllers. However, there are many loop transitions in Figure \ref{fig-2} such that the transition starting from the equilibrium will go back to the equilibrium, thereby resolving the above refinement complexity issue. Obviously, the shortest loop transition is the transition from the state 21 to the state 21.

\begin{figure}[!t]
\begin{center}
\begin{picture}(80, 90)
\put(-50,-20){\resizebox{65mm}{40mm}{\includegraphics[width=2.5in]{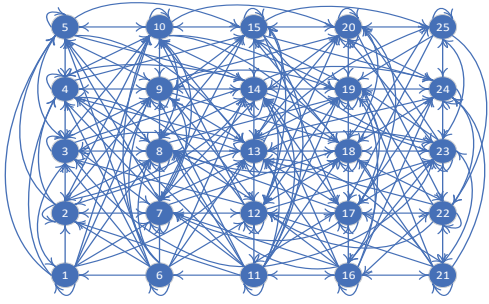}}}
\end{picture}
\end{center}
\caption{Symbolic model $T_{0.2, 0.2}(\Sigma)$ for the system $\Sigma$, where the abstract state $\hat{q}$ in $T_{0.2, 0.2}(\Sigma)$ with $q=(-0.72, -0.72)$ corresponds to the state $1$ in this figure.}
\label{fig-2}
\end{figure}

\begin{figure}[!t]
\begin{center}
\begin{picture}(80, 90)
\put(-50,-20){\resizebox{65mm}{40mm}{\includegraphics[width=2.5in]{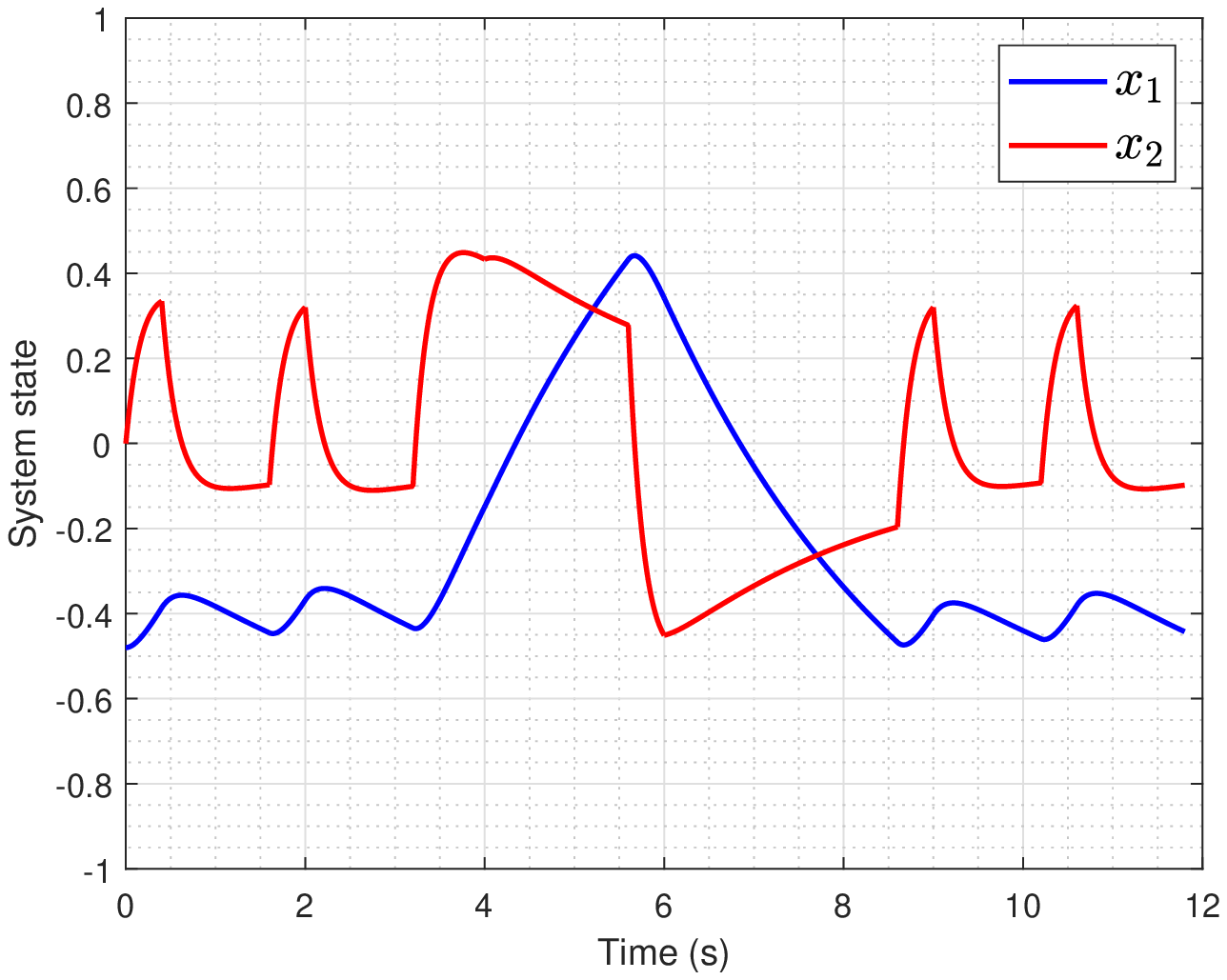}}}
\end{picture}
\end{center}
\caption{Trajectory of the control system $\Sigma$ with initial condition $(-0.48, 0)$ and control strategy synthesized on $T_{0.2, 0.2}(\Sigma)$.}
\label{fig-3}
\end{figure}

\begin{figure}[!t]
\begin{center}
\begin{picture}(80, 90)
\put(-50,-20){\resizebox{65mm}{40mm}{\includegraphics[width=2.5in]{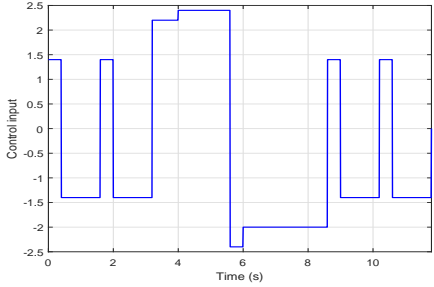}}}
\end{picture}
\end{center}
\caption{The synthesized control strategy for the system $\Sigma$ based on the constructed symbolic model $T_{0.2, 0.2}(\Sigma)$.}
\label{fig-4}
\end{figure}

In the following, the controller synthesis is illustrated for the symbolic model $T_{0.2, 0.2}(\Sigma)$. Assume that the objective is to design a controller to enforce an alternation between two different periodic motions: the first periodic motion $\mathcal{S}_{1}$ requires the state of $\Sigma$ to cycle between $(-\phi, 0)$ and $(0, 0)$, whereas the second periodic motion $\mathcal{S}_{2}$ requires the state to cycle between $(-\phi, 0)$ and $(\phi, 0)$, where $\phi$ is set as $0.48$. To achieve this objective, our control aim is to design a controller such that the system $\Sigma$ satisfies a specification $\mathcal{S}$, which requires the execution of the sequence of periodic motions $\mathcal{S}_{1}, \mathcal{S}_{1}, \mathcal{S}_{2}, \mathcal{S}_{1}, \mathcal{S}_{1}$. For this specification, a control strategy for periodic motions $\mathcal{S}_{1}$ and $\mathcal{S}_{2}$ can be obtained by performing a search on $T_{0.2, 0.2}(\Sigma)$ or by using standard methods in the context of supervisory control \citep{Ramadge1987supervisory}. A solution for the execution of $\mathcal{S}_{1}$ is given by $(-\phi, 0)\overset{1.4}{\longrightarrow}(0, 0)\overset{-1.4}{\longrightarrow}(-\phi, 0)$, and a solution for the execution of $\mathcal{S}_{2}$ is given by $(-\phi, 0)\overset{2.2}{\longrightarrow}(0, \phi)\overset{2.4}{\longrightarrow}(\phi, 0)\overset{-2.4}{\longrightarrow}(0, -\phi)\overset{-2}{\longrightarrow}(-\phi, 0)$. By combining these two solutions, a control strategy for the specification $\mathcal{S}$ is derived, and we have the following transitions: $(-\phi, 0)\overset{1.4}{\longrightarrow}(0, 0)\overset{-1.4}{\longrightarrow}(-\phi, 0)\overset{1.4}{\longrightarrow}(0, 0)\overset{-1.4}{\longrightarrow}(-\phi, 0)\overset{2.2}{\longrightarrow}(0, \phi)\overset{2.4}{\longrightarrow}(\phi, 0)\overset{-2.4}{\longrightarrow}(0, -\phi)\overset{-2}{\longrightarrow}(-\phi, 0)\overset{1.4}{\longrightarrow}(0, 0)\overset{-1.4}{\longrightarrow}(-\phi, 0)$. Note that some transitions are not obtained within one sampling period. This means that the abstract state may stay the same after certain transitions, which results from the constructed symbolic abstraction via logarithmic quantization; see also the loop transitions in Figure \ref{fig-2}. The evolution of the system state is shown in Figure \ref{fig-3}, and the control strategy is presented in Figure \ref{fig-4}. The completion time of the specification $\mathcal{S}$ is 11.8s, whereas the completion time in \citep{Pola2008approximately} is 24s, which implies the reduction of the computation time since the refinement complexity issue is resolved in this paper.

\begin{figure}[!t]
\begin{center}
\begin{picture}(80, 90)
\put(-20,-20){\resizebox{40mm}{40mm}{\includegraphics[width=2.5in]{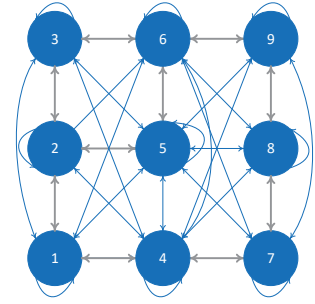}}}
\end{picture}
\end{center}
\caption{The partition of the state 13 in Figure \ref{fig-2} with zoom quantization and $\delta=0.3$.}
\label{fig-5}
\end{figure}

\begin{figure}[!t]
\begin{center}
\begin{picture}(80, 90)
\put(-50,-20){\resizebox{65mm}{40mm}{\includegraphics[width=2.5in]{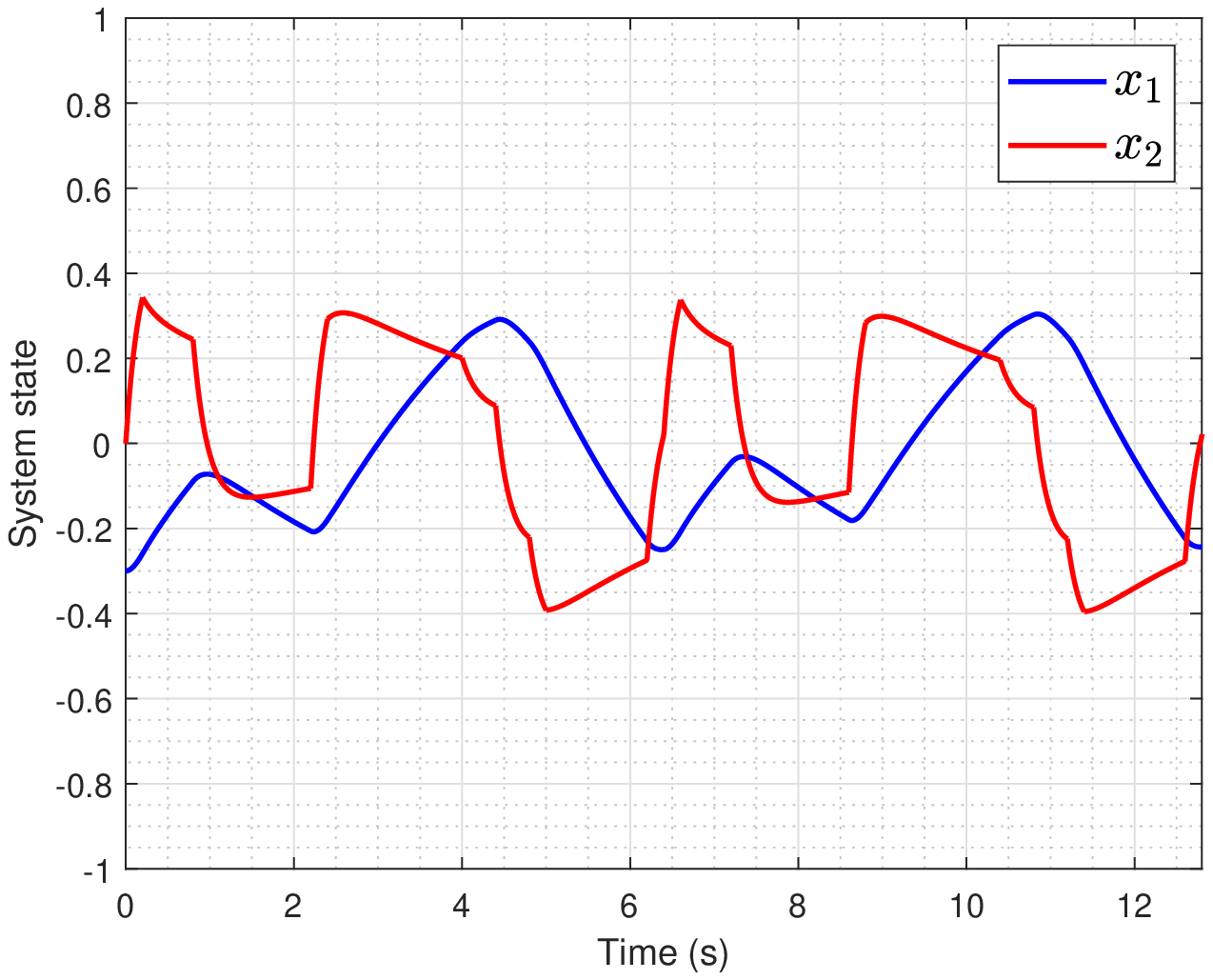}}}
\end{picture}
\end{center}
\caption{Trajectory of the control system $\Sigma$ with initial condition $(-0.4, 0)$ and control strategy synthesized on $T_{0.2, 0.2, 0.3}(\Sigma)$.}
\label{fig-6}
\end{figure}

\begin{figure}[!t]
\begin{center}
\begin{picture}(80, 90)
\put(-50,-20){\resizebox{65mm}{40mm}{\includegraphics[width=2.5in]{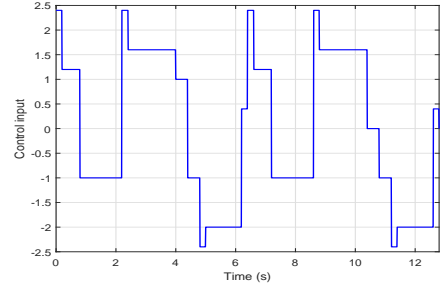}}}
\end{picture}
\end{center}
\caption{Trajectory of the control system $\Sigma$ with initial condition $(-0.48, 0)$ and control strategy synthesized on $T_{0.2, 0.2, 0.3}(\Sigma)$.}
\label{fig-7}
\end{figure}

On the other hand, if $\phi=0.3$ in the specifications, then we need to further approximate the state 13 in Figure \ref{fig-2}, which is based on Subsection \ref{subsec-delayapprox} and is presented in Figure \ref{fig-5}. The state 13 in Figure \ref{fig-2} is partitioned into 9 states in Figure \ref{fig-5}, where the transitions among these 9 states are presented. With the abstraction refinement, we can similarly obtain a symbolic model $T_{0.2, 0.2, 0.3}(\Sigma)$. A control strategy for the alternation between the specifications $\mathcal{S}_{1}$ and $\mathcal{S}_{2}$ is given by $(-\phi, 0)\overset{2.4}{\longrightarrow}(-\phi, \phi)\overset{1.2}{\longrightarrow}(0, \phi)\overset{-1}{\longrightarrow}(0, 0)\overset{1}{\longrightarrow}(-\phi, 0)\overset{2.4}{\longrightarrow}(-\phi, \phi)\overset{1.6}{\longrightarrow}(0, \phi)\overset{1}{\longrightarrow}(\phi, \phi)\overset{-1}{\longrightarrow}(\phi, 0)\overset{-2.4}{\longrightarrow}(\phi, -\phi)\overset{-2}{\longrightarrow}(0, -\phi)\overset{-2}{\longrightarrow}(-\phi, -\phi)\overset{0.4}{\longrightarrow}(-\phi, 0)$; see the gray transitions in Figure \ref{fig-5}. The control strategy is presented in Figure \ref{fig-7}, and the evolution of the system state is shown in Figure \ref{fig-6}. Comparing with the approaches in \citep{Reissig2017feedback, Pola2008approximately}, we do not need to rediscretize the state and input sets, and do not increase the number of the abstract states greatly. For instance, to achieve the aforementioned specification, 33 abstract states are involved here, whereas the rediscretization of the state set is needed and 49 abstract states are involved by using the approaches in \citep{Reissig2017feedback, Pola2008approximately}.

Assume that $W$ in \eqref{eqn-18} is of the form $\alpha x_{2t}+u(t-r)$ with bounded constants $\alpha\in\mathbb{R}$ and $\Theta, r>0$, and \eqref{eqn-18} is rewritten as a time-delay control system:
\begin{align}
\label{eqn-21}
\bar{\Sigma}: \dot{x}_{1}&=x_{2}, \quad \dot{x}_{2}=-m\sin(x_{1})-lx_{2}+\alpha x_{2t}+u(t-r).
\end{align}
For the system $\bar{\Sigma}$, we can use the zoom quantizer to further partition the abstract state obtained by the logarithmic quantization. For instance, using the quantizer \eqref{eqn-7} with $\Lambda\delta=0.1$, the abstract state 1 in Figure \ref{fig-2} is partitioned into 25 smaller regions in Figure \ref{fig-8}. For the time-delay initial trajectory (the red curve in Figure \ref{fig-8}), we only need three smaller regions (i.e., the cyan regions) to cover it. Similar to the delay-free case, we can obtain the symbolic model as proposed in Subsection \ref{subsec-delaysymbolic}, and study the control synthesis problem for a given specification.

\begin{figure}[!t]
\begin{center}
\begin{picture}(80, 90)
\put(-30,-22){\resizebox{40mm}{40mm}{\includegraphics[width=2.5in]{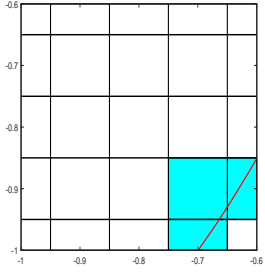}}}
\end{picture}
\end{center}
\caption{The partition of the abstract state 1 in Figure \ref{fig-2} via the zoom quantization. Only three smaller regions (the cyan part) are needed to cover the time-delay initial trajectory (the red curve).}
\label{fig-8}
\end{figure}

\section{Conclusion}
\label{sec-conclusion}

In this paper, we studied symbolic abstraction of nonlinear control systems in both the delay-free and time-delay cases. For the delay-free case, both the state and input sets were approximated via the logarithmic quantizer to reduce computational complexity, and a symbolic model was developed. For the time-delay case, the combination of the logarithmic and zoom quantizers was applied to approximate the state and input sets, and symbolic model was also constructed. In both cases, a feedback refinement relation was verified for the symbolic model and the original system. Future work will be directed to the construction of symbolic abstractions for switched control systems, and to applications of the symbolic models to multi-agent systems.

\bibliographystyle{elsarticle-num-names}

\end{document}